\newcommand{\refcheckize}[1]{%
	\expandafter\let\csname @@\string#1\endcsname#1%
	\expandafter\DeclareRobustCommand\csname relax\string#1\endcsname[1]{%
		\csname @@\string#1\endcsname{##1}\wrtusdrf{##1}}%
	\expandafter\let\expandafter#1\csname relax\string#1\endcsname
}
\newtheorem{proposition}{Proposition}
\newtheorem{theorem}{Theorem}
\newtheorem{corollary}{Corollary}
\newtheorem{lemma}{Lemma}
\theoremstyle{definition}
\newtheorem{example}{Example}
\theoremstyle{remark}
\newtheorem{remark}{Remark}
\newcommand{\range}[1]{\{1,\dots,#1\}}
\newcommand{\id}{\mathbbm{1}}
\newcommand{\card}[1]{|{#1}|}
\newcommand{\Q}{\mathbb{Q}}
\DeclareMathOperator{\Iso}{Iso}
\DeclareMathOperator{\Ker}{Ker}
\DeclareMathOperator{\F}{F}
\DeclareMathOperator{\Mon}{Mon}
\DeclareMathOperator{\restr}{restr}
\DeclareMathOperator{\Stab}{Stab}
\newcommand{\M}{M}
\renewcommand{\P}{\mathcal{P}}
\renewcommand{\O}{\mathcal{O}}
\newcommand{\defeq}{\vcentcolon=}
\title{Isometry Groups of Combinatorial Codes}
\author[S. Dyshko]{Serhii Dyshko}
\email{dyshko@univ-tln.fr}
\begin{document}
\begin{center}
	Draft, \today
\end{center}
\begin{abstract}
	Two isometry groups of combinatorial codes are described: the group of automorphisms and the group of monomial automorphisms, which is the group of those automorphisms that extend to monomial maps. Unlike the case of classical linear codes, where these groups are the same, it is shown that for combinatorial codes the groups can be arbitrary different. Particularly, there exist codes with the full automorphism group and the trivial monomial automorphism group. In the paper the two groups are characterized and codes with predefined isometry groups are constructed.
\end{abstract}

\maketitle

\section{Introduction}
An automorphism group of a classical linear code is the group of those linear bijections from the code to itself that preserve the Hamming distance.
MacWilliams proved in her Ph.D. thesis~\cite{macwilliams-phd61} that each linear Hamming isometry of a classical linear code extends to a monomial map, i.e., it acts by permutation of coordinates and multiplication of coordinates by nonzero scalars.
Consequently, each automorphism of a linear code extends to a monomial map.

As it was shown in numerous papers~\cite{dinh-lopez, greferath, wood-foundations}, for linear codes over module alphabets an analogue of the MacWilliams Extension Theorem does not hold in general. This means that there may exist codes with automorphisms that do not extend to monomial maps. In the context of combinatorial codes, i.e., codes without any algebraic structure, the situation is similar, see~\cite{aug1, kov1, heise}.

For the case of non-classical codes, along with the automorphism group of a code there is observed the subgroup of those automorphisms that extend to monomial maps. As was mentioned above, unlike the case of classical linear codes, the two groups may not be the same.

In~\cite{wood-iso-slides} Wood investigated the question of how different the two groups of a linear code over a matrix module alphabet can be. 
He showed, under certain assumptions, that there exists a linear code over a matrix module alphabet with predefined group of automorphism and group of monomial automorphism.
In this paper we prove a similar statement for combinatorial codes.

\section{Preliminaries}
Let $A$ be a finite set, called an \emph{alphabet}, and let $n$ be a positive integer.
The \emph{Hamming distance} is a function
$\rho_H: A^n \times A^n \rightarrow \{0,\dots,n\}$, defined as, for $a,b \in A^n$,
$$
\rho_H(a,b) \defeq \card{ \{ k \mid a_k \neq b_k \} }.
$$
The set $A^n$ equipped with the Hamming distance is a metric space called a \emph{Hamming space}.
A \emph{code} $C$ is a subset of the Hamming space $A^n$. The elements of $C$ are called \emph{codewords}.

A map $f:C \rightarrow A^n$ is called a \emph{Hamming isometry} if for each two codewords $a,b \in C$, $\rho_H(a,b) = \rho_H(f(a),f(b))$, i.e., the map $f$ preserves the Hamming distance.

For a finite set $X$, let $\mathfrak{S}(X)$ denote the symmetric group on $X$. Denote $\mathfrak{S}_n = \mathfrak{S}(\range{n})$, where $n$ is a positive integer.

A map $h: A^n \rightarrow A^n$ is called \emph{monomial} if there exists a permutation $\pi \in \mathfrak{S}_n$ and permutations $\sigma_1, \dots, \sigma_n \in \mathfrak{S}(A)$ such that for each $a \in A^n$,
$$
h(a) = \left( \sigma_1(a_{\pi(1)}), \dots, \sigma_n(a_{\pi(n)})\right).
$$

Let $C \subseteq A^n$ be a code with $m \geq 3$ codewords. Consider the set $\M = \range{m}$, called the \emph{set of messages}, and consider an \emph{encoding map} $\lambda : M \rightarrow A^n$ of the code $C$, i.e., an injective map such that $\lambda(\M) = C$. For every $g \in \mathfrak{S}(M)$ the map $\lambda g \lambda^{-1}: C \rightarrow C$ is a well-defined bijection.

Define \emph{the group of automorphisms} of $C$,
$$
\Iso(C) \defeq \{ g \in \mathfrak{S}(M) \mid \lambda g\lambda^{-1} \text{ is a Hamming isometry}\},
$$
and \emph{the group of monomial automorphisms} of $C$,
$$
\phantom{.}\Mon(C) \defeq \{ g \in \Iso(C) \mid \lambda g\lambda^{-1} \text{ extends to a monomial map} \}.
$$

\begin{remark}
	In coding theory the group $\Mon^*(C)$ of those monomial maps that preserve $C$ is often used.
	The set of all monomial maps of $A^n$ form a group, which is isomorphic to the wreath product $\mathfrak{S}_n \wr \mathfrak{S}(A)$, see~\cite[Section~2.6]{dixon}.
	Note that $\Mon(C)$ and $\Mon^*(C)$ are different objects: $\Mon(C)$ is a subgroup of $\mathfrak{S}(M)$ and $\Mon^*(C)$ is a subgroup of the full group of monomial maps.
	However, there exists a connection. For a monomial map $h \in \Mon^*(C)$ the map $\lambda^{-1} h\lambda$ is in $\Iso(C)$. By defining the map $\restr: \Mon^*(C) \rightarrow \Iso(C), h \mapsto \lambda^{-1} h\lambda,$
we have the equality of groups $\Mon(C) = \restr(\Mon^*(C))$. The groups $\Mon(C)$ and $\Mon^*(C)$ are isomorphic unless the homomorphism $\restr$ has a nontrivial kernel. The last holds if and only if the code has two columns that differ by a permutation of the alphabet.
\end{remark}

In~\cite{bonneau} and in~\cite[Theorem 1]{isomconf} the equality of the groups $\Iso(A^n)$ and $\Mon(A^n)$ was proven. However, in general, for codes in $A^n$ the equality of groups does not hold.

\begin{example}
	Suppose $A = \{0,1\}$, $\M = \range{5}$. Consider the code $C$ of cardinality $5$ in $A^4$,
$$
	C = \{ (0,0,0,0), (1,1,0,0), (1,0,1,0), (1,0,0,1), (0,1,1,0)\},
$$
	where the encoding $\lambda: M \rightarrow C$ is given in the presented order.
	The group $\Iso(C) < \mathfrak{S}(M)$ is generated by cycles $(1,2)$, $(1,2,3)$ and $(4,5)$ and has $12$ elements. From the other side, $\Mon(C)$ is a subgroup of $\Iso(C)$ generated by permutations $(1,2)$ and $(1,2,3)$. For example, denoting $g_0 = (4,5) \in \Iso(C)$, the map $\lambda g_0\lambda^{-1}$ is an automorphism of $C$ that does not extend to a monomial map,
$$
\begin{array}{cccc}
0 & 0 & 0 & 0 \\
1 & 1 & 0 & 0 \\
1 & 0 & 1 & 0 \\
1 & 0 & 0 & 1 \\
0 & 1 & 1 & 0
\end{array} \xrightarrow{\lambda g_0 \lambda^{-1}}
\begin{array}{cccc}
0 & 0 & 0 & 0 \\
1 & 1 & 0 & 0 \\
1 & 0 & 1 & 0 \\
0 & 1 & 1 & 0 \\
1 & 0 & 0 & 1
\end{array}.
$$
Indeed, one can easily check that $\lambda g_0 \lambda^{-1} : C \rightarrow C$ is a Hamming isometry. Also, the fourth column from the right hand side has equal first four elements, but there is no such column from the left hand side. Hence, the code automorphism $\lambda g_0 \lambda^{-1}$ does not extend to a monomial map.

Therefore, $\Mon(C) \neq \Iso(C)$. Note that the group $\Mon^*(C)$ is generated by two elements. One acts by swapping the second and the third column of $C$ and another acts by inverting the symbols in the first two columns and then swapping the first two columns. Both groups $\Mon^*(C)$ and $\Mon(C)$ have $6$ elements and are isomorphic to the symmetric group $\mathfrak{S}_3$. 
\end{example}
There exist more complex examples. Recall that a code $C \subseteq A^n$ is an $(n, K, d)$ $q$-ary code if it has cardinality $K$, minimum distance $d$ and the alphabet has $q$ elements. 
According to \cite{heise}, if $C$ is $(q, q^2, q - 1)$ or $(q + 1, q^2, q)$ $q$-ary MDS code with $q \neq 2$, then $\card{\Iso(C)} > \card{\Mon(C)}$, and thus $\Iso(C) \neq \Mon(C)$. The same holds, for example, for $(q, (q-1)^2, q-1)$ $q$-ary equidistant codes, where $q \geq 5$ and both $q$ and $q-1$ are prime powers (see \cite{kov1}).

The main result of this paper is given in the next section.

\section{Main result}

Let $G$ be a group acting on a set $X$. We say that a subgroup $H \leq G$ is \emph{closed under the action on $X$} if $H$ consists of all those elements in $G$ that preserve the orbits of $H$.
To be precise, we say that $H$ is closed under the action on $X$ if,
$$
	\forall g \in G,\;\forall O \in X/H,\quad g(O) = O \Rightarrow g \in H,
$$
where $X/H$ is the set of orbits of $H$ acting on $X$.
A similar definition of a closed group and a definition of a closure of a group were introduced by Wielandt in his 2-closure theory (see~\cite{wielandt_1969}). More results on the closure can be found in~\cite[Section~2.4]{dixon} and \cite{xu_closures}.

Let $q$ denote the cardinality of the alphabet $A$, $q \geq 2$.
Consider the set $\P$ of all the partitions of the set $\M$ that have at most $q$ classes,
$$
\P \defeq \big\lbrace \{ c_1, \dots, c_t \} \mid c_1 \sqcup \dots \sqcup c_t = \M,\;\; t \leq q \big\rbrace,
$$
where $c_i \subseteq \M$, for $i \in \range{t}$, and $\sqcup$ denotes the disjoint union of sets.

The canonical action of the group $\mathfrak{S}(M)$ on the set $\P$ is defined in the following way, for $g \in \mathfrak{S}(M)$, for $\alpha = \{c_1, \dots, c_t \} \in \P$,
$$
g(\alpha) \defeq \{ g(c_1), \dots, g(c_t) \}.
$$

In $\P$ we distinguish a subset
$$
\P_2 \defeq \Bigl\lbrace \big\lbrace \{ i,j\}, \{\M\setminus \{i,j \} \} \big\rbrace \Bigm| \{i,j\} \subset \M \Bigr\rbrace.
$$
Since each partition in $\P_2$ has two classes, which is not greater than $q$, $\P_2 \subset \P$.
The group $\mathfrak{S}(M)$ naturally acts on $\P_2$ and $\P \setminus \P_2$.

\begin{theorem}\label{thm:main}
	Let $A$ be a finite set alphabet of cardinality $q \geq 2$ and
	let $C$ be a code over the alphabet $A$ of cardinality $m \geq 5$ or $m = 3$. The following statements hold.
\begin{enumerate}[label=(\roman*),ref=(\roman*)]
		\item The group $\Iso(C) \leq \mathfrak{S}(M)$ is closed under the action on $\P_2$.\label{thm:main:a}
		\item The group $\Mon(C)$ is equal to an intersection of $\Iso(C)$ with a subgroup of $\mathfrak{S}(M)$ closed under the action on $\P\setminus \P_2$. \label{thm:main:b}
		\item For each subgroup $H_1 \leq \mathfrak{S}(M)$ that is closed under the action on $\P \setminus \P_2$, for each subgroup $H_2 \leq \mathfrak{S}(M)$ that is closed under the action on $\P_2$, there exists a code $C$ of cardinality $m \geq 5$ such that
		$$
		\Mon(C) = H_1 \cap \Iso(C) \quad \text{and} \quad \Iso(C) = H_2.
		$$\label{thm:main:c}
\end{enumerate}
\end{theorem}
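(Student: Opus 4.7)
The plan is to reduce everything to the combinatorics of the column partitions. For each coordinate $k \in \range{n}$, the map $i \mapsto \lambda(i)_k$ induces a partition $\beta_k \in \P$ of $\M$ into its fibers, and I would define the column multiplicity function $n_\beta \defeq |\{k : \beta_k = \beta\}|$. Writing $\mathrm{sep}_\beta(i,j) = 1$ when $i,j$ lie in distinct classes of $\beta$ and $0$ otherwise, the Hamming distance decomposes as $\rho_H(\lambda(i),\lambda(j)) = \sum_{\beta \in \P} n_\beta\, \mathrm{sep}_\beta(i,j)$. From this I would record two working characterizations that drive everything: $g \in \Iso(C)$ iff this function of pairs is $g$-invariant, and $g \in \Mon(C)$ iff $n_{g(\beta)} = n_\beta$ for every $\beta \in \P$ (a monomial extension is a column permutation together with alphabet permutations, and matching columns amounts to matching their induced partitions).

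Part~(i) is then essentially immediate. If $g \in \mathfrak{S}(\M)$ preserves every $\Iso(C)$-orbit on $\P_2$, then for each pair $\{i,j\}$ there is $h \in \Iso(C)$ with $h(\beta) = g(\beta)$ for $\beta = \{\{i,j\},\M \setminus \{i,j\}\}$; when $m \geq 5$ or $m = 3$ the $2$-element class of $\beta$ has a size different from its complement, so the equality of partitions forces $\{h(i),h(j)\} = \{g(i),g(j)\}$, and the isometry of $h$ yields $\rho_H(\lambda(i),\lambda(j)) = \rho_H(\lambda(g(i)),\lambda(g(j)))$, hence $g \in \Iso(C)$. This recovery of the pair from the partition is exactly what breaks at $m = 4$.

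For part~(ii) I would take $K$ to be the closure of $\Mon(C)$ under the action on $\P \setminus \P_2$; it is closed by construction and contains $\Mon(C)$, so one inclusion in $\Mon(C) = \Iso(C) \cap K$ is immediate. Conversely, given $g \in \Iso(C) \cap K$, the multiplicity function is constant on $\Mon(C)$-orbits and $g$ preserves those orbits on $\P \setminus \P_2$, so $n_{g(\beta)} = n_\beta$ for every $\beta \in \P \setminus \P_2$. Substituting into the isometry identity leaves
\[
\sum_{\beta \in \P_2} (n_{g(\beta)} - n_\beta)\, \mathrm{sep}_\beta(s,t) = 0 \qquad \text{for every pair } \{s,t\},
\]
and the key lemma is that $\{\mathrm{sep}_\beta : \beta \in \P_2\}$ is linearly independent in $\mathbb{R}^{\binom{\M}{2}}$ when $m \geq 5$ or $m = 3$. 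This linear independence is the main technical obstacle I anticipate: substituting a hypothetical relation into every pair and introducing the auxiliary quantity $F(s) \defeq \sum_{w \neq s} c_{\{s,w\}}$ reduces everything to $c_{\{s,t\}} = (F(s) + F(t))/2$, whose only solution under the stated $m$-hypothesis is the trivial one.

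Part~(iii) uses the same lemma in a constructive direction. For $m \geq 5$ one has $|\P_2| = \binom{m}{2}$, so the independent family $\{\mathrm{sep}_\beta : \beta \in \P_2\}$ is in fact a basis of $\mathbb{R}^{\binom{\M}{2}}$, and every $\mathrm{sep}_\gamma$ with $\gamma \in \P \setminus \P_2$ admits a unique expansion $\mathrm{sep}_\gamma = \sum_{\beta \in \P_2} c^\gamma_\beta\, \mathrm{sep}_\beta$ with $c^\gamma_\beta \in \Q$. I would then build $C$ in two layers: first choose $n' \colon \P \setminus \P_2 \to \mathbb{Z}_{>0}$ with distinct positive values on distinct $H_1$-orbits, so closedness of $H_1$ yields $\Stab(n') = H_1$; analogously pick $v \colon \P_2 \to \mathbb{Z}_{>0}$ with $\Stab(v) = H_2$. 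Define $n_\beta \defeq v_\beta - \sum_\gamma n'_\gamma\, c^\gamma_\beta$ for $\beta \in \P_2$; after rescaling $n'$ by a common denominator of the $c^\gamma_\beta$ and then scaling $v$ by a sufficiently large integer, all $n_\beta$ become non-negative integers. Assemble $C$ with $n_\beta$ columns of partition $\beta$ for each $\beta \in \P_2$ and $n'_\gamma$ columns of partition $\gamma$ for each $\gamma \in \P \setminus \P_2$, labeling classes arbitrarily in $A$. By design, the $\P_2$-coordinates of the resulting distance function in the above basis are precisely $v$, so $\Iso(C) = \Stab(v) = H_2$; using the equivariance identity $c^{g\gamma}_{g\beta} = c^\gamma_\beta$ (which follows from $\mathrm{sep}_{g\gamma} = \mathrm{sep}_\gamma \circ g^{-1}$), a short check shows that for $g \in \Iso(C)$ preservation of $n|_{\P_2}$ is equivalent to preservation of $n'$, giving $\Mon(C) = \Iso(C) \cap \Stab(n') = H_2 \cap H_1 = H_1 \cap \Iso(C)$ as required.
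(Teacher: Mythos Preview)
Your proposal is correct and follows essentially the same architecture as the paper: both reduce to the multiplicity function on $\P$, characterize $\Iso$ and $\Mon$ as stabilizers, and exploit that the separation vectors $\{\mathrm{sep}_\beta : \beta \in \P_2\}$ form a basis of the pair-function space when $m\neq 4$.

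The differences are organizational rather than conceptual. The paper packages everything through a linear map $W:\F(\P,\Q)\to\F(\O,\Q)$ and a direct-sum decomposition $\F(\P,\Q)=V_0\oplus\Ker W$; your expansion $\mathrm{sep}_\gamma=\sum_\beta c^\gamma_\beta\,\mathrm{sep}_\beta$ is exactly the projection onto $V_0$ in that language, and your code construction in~(iii) is the paper's $\eta_0+\xi_x$ written out coordinatewise. Two points are genuinely different. First, for the key invertibility the paper exhibits an explicit inverse matrix $D$ (with entries depending on $m$) and checks $BD=I$, whereas your $F(s)$-argument is a cleaner elementary route to the same linear independence. Second, in~(ii) the paper takes the closed subgroup to be $\Stab(\eta_0)$, the stabilizer of the $\Ker W$-component of the multiplicity function, and then has to \emph{prove} that this stabilizer is closed on $\P\setminus\P_2$ via the basis $\zeta_\alpha$; you instead take the closure of $\Mon(C)$ itself, which is closed by construction and makes the argument shorter. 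Both choices satisfy the statement, though they are in general different subgroups (yours is contained in the paper's).
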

From the fact that, for $m \geq 5$, the trivial subgroup $\{e\}$ and the full group $\mathfrak{S}(M)$ are closed under the action on $\P \setminus \P_2$ and $\P_2$ respectively, we get the first corollary.
\begin{corollary}
For every integer $m \geq 5$, there exists a code $C$ of cardinality $m$ with the maximal group $\Iso(C) = \mathfrak{S}(M)$ and the minimal group $\Mon(C) = \{e\}$.
\end{corollary}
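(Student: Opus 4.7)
The plan is to apply Theorem~\ref{thm:main}\ref{thm:main:c} with $H_1 \defeq \{e\}$ and $H_2 \defeq \mathfrak{S}(\M)$. Once the two closure hypotheses of that theorem are verified, it directly supplies a code $C$ of cardinality $m$ satisfying $\Iso(C) = H_2 = \mathfrak{S}(\M)$ and $\Mon(C) = H_1 \cap \Iso(C) = \{e\}$, which is exactly what the corollary asserts. The entire task thus reduces to checking that $\mathfrak{S}(\M)$ is closed under the action on $\P_2$ and that $\{e\}$ is closed under the action on $\P \setminus \P_2$.

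The first closure condition is vacuous: the conclusion ``$g \in H_2$'' of the defining implication holds automatically for every $g \in \mathfrak{S}(\M)$ when $H_2$ is the full ambient symmetric group, so no partition or orbit need be examined.

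For the second, note that the orbits of the trivial group on $\P \setminus \P_2$ are all singletons, so closure amounts to showing that any $g \in \mathfrak{S}(\M)$ fixing every partition in $\P \setminus \P_2$ must be the identity. I would exhibit, for each $i \in \M$, the bipartition $\alpha_i \defeq \bigl\{\{i\}, \M \setminus \{i\}\bigr\}$; it has two classes, of sizes $1$ and $m-1$, so $\alpha_i$ belongs to $\P$ (since $q \geq 2$) and belongs to $\P \setminus \P_2$ (since $m \geq 5$ forces $m-1 \geq 4 \neq 2$, so neither class has size exactly $2$). A permutation fixing $\alpha_i$ must map its singleton class to itself, hence fixes $i$; letting $i$ range over $\M$ then forces $g = e$. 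I do not anticipate any real obstacle: once Theorem~\ref{thm:main} is in hand the corollary is a direct specialization, and the only substantive, though brief, step is producing enough partitions in $\P \setminus \P_2$ to pin down each element of $\M$ individually.
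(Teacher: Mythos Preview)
Your proposal is correct and follows exactly the same approach as the paper: apply Theorem~\ref{thm:main}\ref{thm:main:c} with $H_1=\{e\}$ and $H_2=\mathfrak{S}(\M)$, after noting that these groups are closed under the action on $\P\setminus\P_2$ and $\P_2$ respectively. The paper merely asserts these closure facts in a single sentence, whereas you spell out the (easy) verification using the partitions $\alpha_i=\{\{i\},\M\setminus\{i\}\}$; this is a helpful addition but not a different argument.
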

By using \Cref{thm:main} \ref{thm:main:c} with $H_1 = \mathfrak{S}(M)$, which is closed under the action on $\P \setminus \P_2$, we get the second corollary.
\begin{corollary}
	For every integer $m \geq 5$ or $m = 3$, for each subgroup $H \leq \mathfrak{S}(M)$ that is closed under the action on $\P_2$, there exists a code $C$ of cardinality $m$ such that the groups $\Mon(C)$ and $\Iso(C)$ coincide and are both equal to $H$.
\end{corollary}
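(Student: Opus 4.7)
The plan is to derive the corollary from Theorem~\ref{thm:main}~\ref{thm:main:c} by specializing one of its two subgroup inputs, together with a short direct analysis of the boundary case $m = 3$.

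For $m \geq 5$, I apply Theorem~\ref{thm:main}~\ref{thm:main:c} with $H_2 = H$ and $H_1 = \mathfrak{S}(M)$. The full symmetric group is trivially closed under every action it carries, since the conclusion $g \in \mathfrak{S}(M)$ of the closedness implication is universally true; in particular $\mathfrak{S}(M)$ is closed under the action on $\P \setminus \P_2$. The theorem then yields a code $C$ of cardinality $m$ with
\[
\Iso(C) = H_2 = H \quad \text{and} \quad \Mon(C) = H_1 \cap \Iso(C) = \mathfrak{S}(M) \cap H = H,
\]
as required.

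For $m = 3$, part~\ref{thm:main:c} does not apply, so a direct analysis is needed. The key observation is that every partition in $\P \setminus \P_2$ is fixed by all of $\mathfrak{S}_3$: when $q = 2$ the only element of $\P \setminus \P_2$ is the trivial partition $\{\M\}$, and for $q \geq 3$ the discrete partition into singletons also appears, and both are stabilised by every permutation of $\M$. Thus every subgroup of $\mathfrak{S}_3$ has only singleton orbits on $\P \setminus \P_2$, and consequently the only subgroup of $\mathfrak{S}_3$ closed under this action is $\mathfrak{S}_3$ itself. Invoking Theorem~\ref{thm:main}~\ref{thm:main:b} then gives $\Mon(C) = \Iso(C) \cap \mathfrak{S}_3 = \Iso(C)$ for every code of cardinality $3$. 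It now suffices to realise each $\P_2$-closed subgroup $H$ of $\mathfrak{S}_3$ as $\Iso(C)$; up to conjugacy this reduces to three small binary constructions: a $3$-code with all pairwise distances distinct (for $H = \{e\}$), a $3$-code with two equal distances and one different (for $H$ of order two), and an equidistant $3$-code (for $H = \mathfrak{S}_3$). The alternating group $A_3$ is not $\P_2$-closed and so need not be handled.

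The argument is structurally light; the only conceptual step is spotting that for $m = 3$ the $\mathfrak{S}_3$-action on $\P \setminus \P_2$ is trivial, which via part~\ref{thm:main:b} collapses the corollary into an $\Iso$-realisation problem handled by explicit small codes.
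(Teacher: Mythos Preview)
Your argument for $m \geq 5$ is exactly the paper's one-line proof: take $H_1 = \mathfrak{S}(M)$ in part~\ref{thm:main:c}. For $m = 3$ you diverge. The paper simply invokes part~\ref{thm:main:c} again---its proof, built on \Cref{lemma:Binvertible}, \Cref{thm:sum-of-spaces}, and \Cref{lemma:sum-of-maps-stabs}, was set up uniformly for $m \geq 5$ or $m = 3$, even though the stated conclusion of~\ref{thm:main:c} reads ``$m \geq 5$''. You instead route through part~\ref{thm:main:b}: since $\mathfrak{S}_3$ fixes every partition in $\P \setminus \P_2$, the only subgroup closed under that action is $\mathfrak{S}_3$ itself, so~\ref{thm:main:b} forces $\Mon(C) = \Iso(C)$ for every $3$-element code; you then realise each $\P_2$-closed $H \leq \mathfrak{S}_3$ as an $\Iso(C)$ by explicit binary codes. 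Your detour is correct and self-contained, and it sidesteps the slight mismatch between the statement and proof of~\ref{thm:main:c}; the paper's route is shorter but tacitly relies on the reader seeing that the construction in~\ref{thm:main:c} already covers $m = 3$.
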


\begin{example}\label{ex:comb:ex1}
	For $m = 5$ and $q = 2$, consider the code $C$ of the following form.
	$$	
	\begin{array}{c|ccccc|cccccccccc}
	0 & 0& 1& 2& 3& 4& 6& 5& 4& 3& 4& 3& 2& 2& 1& 0 \\
	\hline
	0 & 1& 0& 0& 0& 0& 1& 1& 1& 1& 0& 0& 0& 0& 0& 0 \\
	0 & 0& 1& 0& 0& 0& 1& 0& 0& 0& 1& 1& 1& 0& 0& 0 \\
	0 & 0& 0& 1& 0& 0& 0& 1& 0& 0& 1& 0& 0& 1& 1& 0 \\
	0 & 0& 0& 0& 1& 0& 0& 0& 1& 0& 0& 1& 0& 1& 0& 1 \\
	0 & 0& 0& 0& 0& 1& 0& 0& 0& 1& 0& 0& 1& 0& 1& 1 \\
	\end{array}
	$$
The numbers over the header line represent the number of occurrences of the column under the line in the code. For instance, in this example, the third column appears once in the code, the fourth column appears twice in the code and the second column does not appear anywhere in the code.

We indicated several columns that does not appear in the code for the completeness. In fact, all possible columns (up to a permutation of the alphabet) are presented in the table. The vertical lines separate the columns with different numbers of $1$s.

The code $C$ is a $(40, 5, 22)$ equidistant binary code with the maximal automorphism group $\Iso(C) = \mathfrak{S}(M)$ and the minimal monomial automorphism group $\Mon(C) = \{e\}$. 
\end{example}

\begin{remark}
In general, for $m \geq 5$, there is no direct relation between the subgroups of $\mathfrak{S}(\M)$ closed under the action on $\P_2$ and $\P\setminus \P_2$.
There exists a subgroup that is closed under the action on $\P \setminus \P_2$ but not closed under the action on $\P_2$. For example, 
if $m = 5$, $q = 3$ and the group is $G = \langle (1,2)(3,4), (1,2)(3,5)\rangle < \mathfrak{S}_5$.
	There also exists a subgroup that is closed under the action on $\P_2$ but not closed under the action on $\P \setminus \P_2$. Consider $m = 5$, $q = 2$ and $G = \langle (1,2)(3,4) \rangle < \mathfrak{S}_5$.
\end{remark}

For codes with $m = 4$ codewords the statement of the theorem is not correct in general and needs to be refined. This case is observed in \Cref{sec:comb:m-equal-4}.

\section{Multiplicity function}
For sets $X, Y$, let $\F(X,Y)$ denote the set of all maps from $X$ to $Y$.
Consider the map $\Psi: \F(\M,A) \rightarrow \P$,
$$
x \mapsto \Psi(x) \defeq \{ x^{-1}(a) \mid a \in A\} \setminus \{ \emptyset \}.
$$
The number of classes in $\Psi(x)$ is at most $\card{A} = q$, so $\Psi(x) \in \P$ and hence the map is defined.

Recall that $\lambda: \M \rightarrow A^n$ is an encoding map of $C$. Let $\lambda_k: \M \rightarrow A$ denote the projection of $\lambda$ on $k$th coordinate for $k \in \range{n}$.
Define the \emph{multiplicity function} $\eta_\lambda \in \F(\P, \Q)$, as follows, for $\alpha \in \P$,
$$
\eta_\lambda(\alpha) \defeq \card{ \{k \mid \Psi({\lambda_k}) = \alpha \} }.
$$

\begin{proposition}\label{thm:from-eta-to-parametrization}
	For every non-zero function $\eta \in \F(\P, \Q)$ with nonnegative integer values there exist a positive integer $n$ and a map $\lambda: \M \rightarrow A^n$, such that
	$\eta_\lambda = \eta.$
\end{proposition}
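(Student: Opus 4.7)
The plan is to build $\lambda$ column by column: for each partition $\alpha$ in the support of $\eta$, attach $\eta(\alpha)$ identical columns whose induced partition on $\M$ is exactly $\alpha$. The crucial point is that every $\alpha \in \P$ has at most $q = \card{A}$ classes, so it can be realized as $\Psi(x)$ for some $x : \M \to A$.

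First I would establish the realizability of a single partition. Given $\alpha = \{c_1, \dots, c_t\} \in \P$ with $t \leq q$, fix an injection $\iota_\alpha : \{c_1, \dots, c_t\} \hookrightarrow A$, which exists because $t \leq q$. Define $x_\alpha : \M \to A$ by $x_\alpha(j) \defeq \iota_\alpha(c_s)$, where $c_s$ is the unique class of $\alpha$ containing $j$. Since $\iota_\alpha$ is injective, the fibers of $x_\alpha$ are exactly $c_1, \dots, c_t$, so $\Psi(x_\alpha) = \alpha$.

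Next I would assemble $\lambda$ from these building blocks. Set
$$
n \defeq \sum_{\alpha \in \P} \eta(\alpha),
$$
which is a positive integer because $\eta$ takes nonnegative integer values and is non-zero. Index the coordinates of $A^n$ by pairs $(\alpha, i)$ with $\alpha \in \P$ and $1 \leq i \leq \eta(\alpha)$, and define $\lambda : \M \to A^n$ coordinate-wise by $\lambda_{(\alpha,i)} \defeq x_\alpha$. Then for every $\beta \in \P$,
$$
\eta_\lambda(\beta) = \card{\{(\alpha, i) \mid \Psi(x_\alpha) = \beta\}} = \card{\{(\alpha, i) \mid \alpha = \beta\}} = \eta(\beta),
$$
so $\eta_\lambda = \eta$ as desired.

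There is essentially no obstacle here; the content of the proposition is precisely the observation that the restriction $t \leq q$ in the definition of $\P$ is exactly what is needed to realize each partition by an $A$-valued function. Note that $\lambda$ need not be injective (e.g.\ if $\eta$ is supported only on the one-class partition $\{\M\}$), but the statement only requires a map, not an encoding, so this is immaterial.
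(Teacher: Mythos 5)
Your proof is correct and is essentially the same as the paper's: both set $n = \sum_{\alpha} \eta(\alpha)$, realize each partition $\alpha$ as the fiber partition of a map $\M \to A$ using the constraint $t \leq q$, and stack $\eta(\alpha)$ copies of that column. The only cosmetic difference is your indexing of coordinates by pairs $(\alpha, i)$ versus the paper's enumeration $\alpha_1, \dots, \alpha_n$.
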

\begin{proof}
	Define $n = \sum_{\alpha \in \P} \eta(\alpha)$ and let $\alpha_1, \dots, \alpha_n \in \P$ be the $n$-tuple of partitions such that for all $\alpha \in \P$,
	$$ \eta(\alpha) = \card{ \{ k \mid \alpha = \alpha_k \} }.$$

	Enumerate the elements of the alphabet $A = \{a_1, \dots, a_q\}$ and fix $k \in \range{n}$. Let $\alpha_k = \{ c_1, \dots, c_t \}$, for some $t \leq q$. Define the map $\lambda_k \in \F(\M,A)$ as
	$$
	\forall i \in \range{t}, \forall j \in c_i, \;\; \lambda_k(j) = a_i.
	$$
	It is easily seen that $\Psi(\lambda_k) = \alpha_k$. Define $\lambda = (\lambda_1,\dots, \lambda_n) : \M \rightarrow A^n$.
	Then, for all $\alpha \in \P$,
$\eta_\lambda(\alpha) = \card{ \{k \mid \Psi(\lambda_k) = \alpha \} } = \card{ \{k \mid \alpha_k = \alpha \} } =\eta(\alpha)$.
\end{proof}

\section{Extension criterion and stabilizers}
Let $\O$ denote the set of pairs of elements of $M$,
$$
\O \defeq \big\lbrace \{i,j\}  \bigm|  \{i,j\} \subset \M \big\rbrace.
$$ The group $\mathfrak{S}(M)$ acts on $\O$ in the following way, for $g \in \mathfrak{S}(M)$,
$$
g(\{i,j\} ) \defeq \{ g(i), g(j) \}. 
$$
Consider the action of $\mathfrak{S}(M)$ on the vector spaces $\F(\P,\Q)$ and $\F(\O,\Q)$, so that, for $g \in \mathfrak{S}(M)$, for $\eta \in \F(\P,\Q)$ and $\alpha \in \P$,
$$
g(\eta)(\alpha) \defeq \eta(g^{-1}(\alpha)),
$$
and for $x \in \F(\O,\Q)$ and $p \in \O$,
$$
g(x)(p) \defeq x(g^{-1}(p)).
$$
The group $\mathfrak{S}(M)$ acts on the vector spaces by automorphisms, i.e., a map $\phi: V \rightarrow V$, associated to the action of $g$ on $V$, is a $\mathbb{Q}$-linear bijection, where $V$ is either $\F(\P, \Q)$ or $\F(\O, \Q)$.

For $\alpha \in \P$ and $p = \{i,j\} \in \O$, define the function $\Delta_\alpha \in \F(\O,\Q)$ as
$$
\Delta_\alpha (p) \defeq \left\lbrace \begin{array}{ll}
0,& \text{if $i$ and $j$ belong to the same class in $\alpha$};\\
1,& \text{otherwise}.
\end{array}
\right.
$$

Consider the $\mathbb{Q}$-linear map,
$$
W : \F(\P, \Q) \rightarrow \F(\O, \Q), \quad W(\eta)(p) \defeq \sum_{\alpha \in \P} \eta(\alpha) \Delta_\alpha{(p)},
$$
where $\eta \in \F(\P,\Q)$, $p \in \O$. A similar map was observed in~\cite{wood-structure-of-linear-codes, wood-foundations}.

\begin{proposition}\label{thm:comb-extension-criterion}
	A map $f: C \rightarrow A^n$ is a Hamming isometry if and only if $W(\eta_\lambda) = W(\eta_{f \lambda})$. The map $f$ extends to a monomial map if and only if $\eta_\lambda = \eta_{f \lambda}$.
\end{proposition}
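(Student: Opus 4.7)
The plan is to work directly from the definitions. The central observation, which drives both parts, is the identity
$$
W(\eta_\lambda)(\{i,j\}) \;=\; \rho_H(\lambda(i), \lambda(j)) \qquad \text{for every } \{i,j\} \in \O.
$$
To verify this, I would unfold the definition of $W$:
$$
W(\eta_\lambda)(\{i,j\}) = \sum_{\alpha \in \P} \eta_\lambda(\alpha)\,\Delta_\alpha(\{i,j\}) = \sum_{k=1}^{n} \Delta_{\Psi(\lambda_k)}(\{i,j\}),
$$
since $\eta_\lambda$ is by definition the indicator counting function of the map $k \mapsto \Psi(\lambda_k)$. Then $\Delta_{\Psi(\lambda_k)}(\{i,j\})$ equals $1$ exactly when $i$ and $j$ sit in different classes of $\Psi(\lambda_k)$, i.e., when $\lambda_k(i)\neq \lambda_k(j)$. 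Summing over $k$ gives the Hamming distance $\rho_H(\lambda(i),\lambda(j))$. The same identity applies verbatim to $f\lambda$ in place of $\lambda$.

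From this identity the first equivalence is immediate: $f$ is a Hamming isometry iff $\rho_H(\lambda(i),\lambda(j)) = \rho_H(f\lambda(i),f\lambda(j))$ for all pairs $\{i,j\}$, which by the identity is exactly $W(\eta_\lambda) = W(\eta_{f\lambda})$.

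For the second equivalence, I would argue as follows. If $f$ extends to a monomial map defined by $\pi \in \mathfrak{S}_n$ and $\sigma_1,\dots,\sigma_n \in \mathfrak{S}(A)$, then for each $k$ and each $j \in \M$,
$(f\lambda)_k(j) = \sigma_k(\lambda_{\pi(k)}(j))$. Since $\sigma_k$ is a bijection on $A$, it preserves fibres, so $\Psi((f\lambda)_k) = \Psi(\lambda_{\pi(k)})$. Summing over $k$ and using the bijectivity of $\pi$ yields $\eta_{f\lambda} = \eta_\lambda$. Conversely, assume $\eta_\lambda = \eta_{f\lambda}$. Then for each partition $\alpha \in \P$ the sets $\{k : \Psi(\lambda_k) = \alpha\}$ and $\{k : \Psi((f\lambda)_k) = \alpha\}$ have the same cardinality, so there exists $\pi \in \mathfrak{S}_n$ with $\Psi((f\lambda)_k) = \Psi(\lambda_{\pi(k)})$ for every $k$. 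For a fixed $k$, the maps $\lambda_{\pi(k)}$ and $(f\lambda)_k$ induce the same partition of $\M$, so any matching of their fibres yields a bijection $\sigma_k$ defined on the image of $\lambda_{\pi(k)}$ with $(f\lambda)_k = \sigma_k \circ \lambda_{\pi(k)}$; extend $\sigma_k$ arbitrarily to a permutation of $A$. The monomial map given by $\pi$ and $\sigma_1,\dots,\sigma_n$ then agrees with $f$ on $C$.

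Both directions are essentially bookkeeping after the key identity is set up, so there is no serious obstacle; the only care needed is, in the converse of the monomial part, to note that equality of column partitions really does allow a matching of fibres by a permutation of the alphabet, which uses nothing more than that both partitions have at most $q$ classes.
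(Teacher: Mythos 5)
Your proposal is correct and follows essentially the same route as the paper: the same computation identifying $W(\eta_\lambda)(\{i,j\})$ with $\rho_H(\lambda(i),\lambda(j))$ for the first equivalence, and the same column-counting argument (producing $\pi$ from the equal cardinalities of the sets $\{k \mid \Psi(\lambda_k)=\alpha\}$ and $\{k \mid \Psi(f\lambda_k)=\alpha\}$, then a permutation $\sigma_k$ of $A$ from the equality of fibre partitions) for the second. The only stylistic difference is that you spell out the extension of $\sigma_k$ from the image of $\lambda_{\pi(k)}$ to all of $A$, which the paper leaves as an ``easy exercise.''
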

\begin{proof}
Calculate the Hamming distance, for all $ p = \{i,j\} \in \O$,
\begin{align*}
	\rho_H(\lambda(i), \lambda(j))
	&=
	\card{ \{ k \mid \lambda_k(i) \neq \lambda_k(j) \} }\\
	&=
	\card{ \{ k \mid \Delta_{\Psi(\lambda_k)}(p)=1 \} }
	\\
	&=\sum_{\alpha \in \P} \card{ \{ k \mid \Psi(\lambda_k) = \alpha \} } \Delta_{\alpha}(p)\\
	&=
	\sum_{\alpha \in \P} \eta_\lambda(\alpha) \Delta_{\alpha}(p) = W(\eta_\lambda)(p).
\end{align*}
The map $f$ is a Hamming isometry if and only if for all $\{i,j\} \subset \M$,
	$$
	\rho_H(\lambda(i), \lambda(j)) = \rho_H(f \lambda(i), f \lambda(j)),
	$$
which is equivalent to the equality $W(\eta_\lambda) = W(\eta_{f \lambda})$.
	
The map $f$ extends to a monomial map if and only if there exist a permutation $\pi \in \mathfrak{S}_n$ and permutations $\sigma_1, \dots, \sigma_n \in \mathfrak{S}(A)$ such that $f\lambda_{k} = \sigma_k\lambda_{\pi(k)}$, for all $k \in \range{n}$.

It is an easy exercise to verify that $\Psi(x) = \Psi(y)$ for two maps $x, y \in \F(\M, A)$, if and only if there exists a permutation $\sigma \in \mathfrak{S}(A)$ such that $\sigma x = y$.

If $f$ extends to a monomial map, then for all $\alpha \in \P$,
\begin{align*}
\eta_{f\lambda}(\alpha) &= \card{ \{k \mid \Psi({f\lambda_k}) = \alpha \} }\\
&= \card{ \{k \mid \Psi({\sigma_k \lambda_{\pi(k)}}) = \alpha \} }\\
&= \card{ \{k \mid \Psi(\lambda_{\pi(k)}) = \alpha \} } = \eta_\lambda(\alpha).
\end{align*}
Conversely, let $\eta_{f\lambda} = \eta_\lambda$. Then for all $\alpha \in \P$ the cardinality of sets $X_{\alpha} = \{k \mid \Psi(\lambda_k) = \alpha \}$ and $Y_{\alpha} = \{k \mid \Psi(f \lambda_k) = \alpha \}$ are equal. The set $\{1,\dots, n\}$ is then a disjoint union of the subsets $X_\alpha$ for $\alpha \in \P$. It is also equal to a disjoint union of the subsets $Y_\alpha$ for $\alpha \in \P$.
Thus, there exists $\pi \in \mathfrak{S}_n$ such that for all $\alpha \in \P$, $\pi(X_\alpha) = Y_\alpha$. Therefore, for all $k \in \range{n}$, $\Psi(\lambda_{\pi(k)}) = \Psi(f\lambda_k)$. 
From this, for every $k \in \range{n}$ there exists $\sigma_k \in \mathfrak{S}(A)$ such that $\sigma_k\lambda_{\pi(k)} = f \lambda_k$, which means that $f$ extends to a monomial map.
\end{proof}

For a map $\eta \in \F(\P, \Q)$ define two stabilizers,
\begin{align*}
\Stab(\eta) &\defeq \{ g \in \mathfrak{S}(M) \mid g(\eta) = \eta \},\\
\Stab_{W}(\eta) &\defeq \{ g \in \mathfrak{S}(M) \mid W(g(\eta)) = W(\eta) \}.
\end{align*}

\begin{proposition}\label{thm:groups-as-stabs}
If $C$ is a code with an encoding map $\lambda : M \rightarrow A^n$, then
$$
\Mon(C) = \Stab(\eta_\lambda), \quad \Iso(C) = \Stab_W(\eta_\lambda).
$$
\end{proposition}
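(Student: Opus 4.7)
The plan is to apply \Cref{thm:comb-extension-criterion} to the map $f = \lambda g \lambda^{-1}$ for an arbitrary $g \in \mathfrak{S}(M)$, and to show that the criterion translates directly into membership in the two stabilizers. The key bookkeeping step is to express $\eta_{f\lambda}$ in terms of the action of $g$ on $\eta_\lambda$.

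First I would observe that $f\lambda = \lambda g$, so $(f\lambda)_k = \lambda_k \circ g$. Since preimages commute with composition, a one-line calculation gives $\Psi(\lambda_k \circ g) = g^{-1}(\Psi(\lambda_k))$, where $g^{-1}$ acts on a partition by applying $g^{-1}$ classwise. Summing multiplicities then yields
$$
\eta_{f\lambda}(\alpha) = \card{\{k \mid \Psi(\lambda_k) = g(\alpha)\}} = \eta_\lambda(g(\alpha)) = g^{-1}(\eta_\lambda)(\alpha),
$$
that is, $\eta_{f\lambda} = g^{-1}(\eta_\lambda)$. By the extension part of \Cref{thm:comb-extension-criterion}, $\lambda g \lambda^{-1}$ extends to a monomial map iff $\eta_\lambda = g^{-1}(\eta_\lambda)$, i.e.\ iff $g(\eta_\lambda) = \eta_\lambda$, which is exactly $g \in \Stab(\eta_\lambda)$. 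This establishes $\Mon(C) = \Stab(\eta_\lambda)$.

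For the isometry equality I would next verify that the map $W$ is $\mathfrak{S}(M)$-equivariant, i.e.\ $W(g(\eta)) = g(W(\eta))$. After expanding both sides and reindexing, this reduces to the identity $\Delta_{g(\alpha)}(p) = \Delta_\alpha(g^{-1}(p))$, which is immediate since $\Delta_\alpha(\{i,j\})$ depends only on whether $i$ and $j$ lie in the same class of $\alpha$, a relation preserved by $g$. Combining equivariance with the isometry part of \Cref{thm:comb-extension-criterion}, we get
$$
W(\eta_\lambda) = W(\eta_{f\lambda}) = W(g^{-1}(\eta_\lambda)) = g^{-1}(W(\eta_\lambda)),
$$
so $\lambda g\lambda^{-1}$ is a Hamming isometry iff $g$ stabilizes $W(\eta_\lambda)$, which by equivariance is equivalent to $W(g(\eta_\lambda)) = W(\eta_\lambda)$, i.e.\ $g \in \Stab_W(\eta_\lambda)$.

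The only real place to slip up is the direction of the action: pulling back along $g$ turns the composition $\lambda_k \circ g$ into $g^{-1}$ acting on partitions, so the natural identity is $\eta_{f\lambda} = g^{-1}(\eta_\lambda)$ rather than $g(\eta_\lambda)$. Once that is handled cleanly, equivariance of $W$ absorbs the inversion and both equalities fall out of \Cref{thm:comb-extension-criterion} with no further work.
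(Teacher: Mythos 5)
Your proposal is correct and follows essentially the same route as the paper: compute $\eta_{f\lambda}$ for $f = \lambda g\lambda^{-1}$ as the action of $g$ on $\eta_\lambda$ and then invoke \Cref{thm:comb-extension-criterion}. Your version $\eta_{\lambda g}=g^{-1}(\eta_\lambda)$ differs by an inverse from the paper's stated $\eta_{\lambda g}=g(\eta_\lambda)$ (your direction of the action is in fact the consistent one), but since $g$ fixes $\eta_\lambda$ iff $g^{-1}$ does, and likewise for $\Stab_W$ via the equivariance $W(g(\eta))=g(W(\eta))$, the two stabilizer identities come out the same either way.
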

\begin{proof}
For all $g \in \mathfrak{S}(M)$, for all $\alpha \in \P$,	
\begin{align*}
	\eta_{\lambda g}(\alpha) = 
	\card{ \{ k \mid \Psi(\lambda_k g) = \alpha \} } = 
	\card{ \{ k \mid \Psi(\lambda_k) = g^{-1}(\alpha) \} } =\eta_\lambda(g^{-1}(\alpha)) = g(\eta_\lambda)(\alpha).
\end{align*}
The statement of the proposition follows directly from \Cref{thm:comb-extension-criterion} by substituting $\lambda g \lambda^{-1}$ in place of $f$.
\end{proof}

\section{Two matrices}
\newcommand{\mysymp}{R}
\newcommand{\rel}[2]{\card{#1 \cap #2} = 1}
\newcommand{\nrel}[2]{\card{#1 \cap #2} \neq 1}
\newcommand{\zrel}[2]{\card{#1 \cap #2} = 0}
\newcommand{\trel}[2]{\card{#1 \cap #2} = 2}

For two pairs $p, t \in \O$, the intersection $p \cap t \subset \M$ can have at most $2$ elements. On $\O$ fix an order.
Let $B$ be the $\card{\O} \times \card{\O}$ matrix defined over $\Q$ and indexed by the elements of $\O$. For $p, t \in \O$,
$$
B_{p,t} \defeq \left\lbrace \begin{array}{ll}
1,& \text{if } \rel{p}{t};\\
0,& \text{if } \nrel{p}{t}.
\end{array}\right.
$$

Let $m \geq 5$ or $m = 3$. Define the $\card{\O} \times \card{\O}$ matrix $D$ as follows, for $p,t \in \O$,
$$
2(m-2)(m-4) \times D_{p,t} \defeq \left\lbrace \begin{array}{ll}
-m^2 + 8m - 14,& \text{if } \card{p \cap t} = 2 \; (\iff p = t);\\
m-4,& \text{if } \rel{p}{t};\\
-2,& \text{if } {\zrel{p}{t}}\; ( \iff p \cap t = \emptyset).
\end{array}\right.
$$

\begin{lemma}\label{lemma:Binvertible}
	If $m \geq 5$ or $m = 3$, then $BD = DB = I$, where $I$ is the identity $\card{\O} \times \card{\O}$ matrix.
\end{lemma}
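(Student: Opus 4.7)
My plan is to verify $BD = I$ entrywise and then conclude $DB = I$ by symmetry: both $B$ and $D$ are symmetric matrices (each entry depends only on $\card{p \cap t} \in \{0,1,2\}$), so $DB = (BD)^{\top} = I^{\top} = I$ is automatic.

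For each $p, t \in \O$ I would write
\[
(BD)_{p,t} = \sum_{s \in \O} B_{p,s} D_{s,t} = \sum_{\substack{s \in \O \\ \card{p \cap s} = 1}} D_{s,t},
\]
using that $B_{p,s}$ is the indicator of $\card{p \cap s} = 1$. Since $D_{s,t}$ depends only on $\card{s \cap t}$, I would partition the $2(m-2)$ pairs $s$ with $\card{p \cap s} = 1$ according to the value of $\card{s \cap t} \in \{0,1,2\}$, handling the three cases $p = t$, $\card{p \cap t} = 1$, and $p \cap t = \emptyset$ separately.

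A short combinatorial count gives the relevant frequencies. If $p = t$, all $2(m-2)$ such $s$ satisfy $\card{s \cap t} = 1$. If $\card{p \cap t} = 1$, say $p = \{a,b\}$ and $t = \{a,c\}$, then exactly one such $s$ equals $t$, exactly $m-2$ satisfy $\card{s \cap t} = 1$, and exactly $m-3$ are disjoint from $t$. If $p \cap t = \emptyset$, then no $s$ equals $t$, exactly $4$ satisfy $\card{s \cap t} = 1$, and exactly $2(m-4)$ are disjoint from $t$. Plugging these counts into the definition of $D$ and clearing the denominator $2(m-2)(m-4)$, the diagonal case reduces to $2(m-2)\cdot(m-4) = 2(m-2)(m-4)$, and the two off-diagonal cases reduce to the polynomial identities
\[
(-m^2 + 8m - 14) + (m-2)(m-4) - 2(m-3) = 0 \quad \text{and} \quad 4(m-4) - 4(m-4) = 0,
\]
both of which hold identically in $m$. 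This is precisely how the three coefficients in the definition of $D$ were engineered.

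The only subtlety, and the main (mild) obstacle, is compatibility with the hypothesis $m \geq 5$ or $m = 3$. The denominator $2(m-2)(m-4)$ vanishes exactly at $m = 2$ and $m = 4$, which is why those values are excluded; for $m = 3$ the disjoint case does not arise at all (any two distinct $2$-subsets of a $3$-set meet) and the count $m - 3 = 0$ makes the corresponding term in the $\card{p \cap t} = 1$ case disappear, so the identity continues to hold without modification. Beyond this sanity check, the argument is pure bookkeeping.
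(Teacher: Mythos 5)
Your proposal is correct and follows essentially the same route as the paper: an entrywise computation of $(BD)_{p,t}$ split into the three cases $p=t$, $\card{p\cap t}=1$, $p\cap t=\emptyset$, with the same combinatorial counts ($2(m-2)$; $1$, $m-2$, $m-3$; $0$, $4$, $2(m-4)$) and the same polynomial identities. The only cosmetic difference is that you get $DB=I$ from the symmetry of $B$ and $D$, while the paper invokes $BD=I\Rightarrow B^{-1}=D$ for square matrices; both are fine.
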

\begin{proof}
	
	For $p, t\in \O$,
	$$
	(BD)_{p,t} = \sum_{r \in \O} B_{p,r} D_{r,t}
	= B_{p,t} D_{t,t} + \sum_{\substack{\rel{r}{p}\\ \rel{r}{t}}} D_{r,t} + \sum_{\substack{\rel{r}{p}\\ \zrel{r}{t}}} D_{r,t}.
	$$
	If $p = t$, then
	$$
	(BD)_{p,p} = 0+\sum_{\rel{r}{p}} D_{r,p}+0 = \frac{1}{2(m-2)}
	\card{ \{ r \in \O \mid \rel{r}{p} \} }  = 1.
	$$
	If $\rel{p}{t}$, then
	\begin{align*}
	(BD)_{p,t} 
	= \frac{-m^2 + 8m - 14}{2(m-2)(m-4)} + \frac{1}{2(m-2)} \card{ \{r \in \O \mid \rel{r}{p};\; \rel{r}{t} \} }
	\\+ \frac{-1}{(m-2)(m-4)}
	\card{ \{ r \in \O \mid \rel{r}{p};\; \zrel{p}{t}\} }\\
	= \frac{-m^2 + 8m - 14}{2(m-2)(m-4)} + \frac{(m-4)}{2(m-2)(m-4)} (m-2) + \frac{-2}{2(m-2)(m-4)} (m-3) = 0.
	\end{align*}
	If $\zrel{p}{t}$, then
	\begin{align*}
	(BD)_{p,t}
	= 0 + \frac{1}{2(m-2)} \card{ \{r \in \O \mid \rel{r}{p};\; \rel{r}{t} \} }
	\\+ \frac{-1}{(m-2)(m-4)}
	\card{ \{ r \in \O \mid \rel{r}{p};\; \zrel{r}{t}\} }\\
	= \frac{4}{2(m-2)} + \frac{-2}{2(m-2)(m-4)} 2(m-4) = 0.
	\end{align*}
Hence, $BD = I$, the matrices $B$ and $D$ are invertible and $B^{-1} = D$.
\end{proof}

\section{Properties of the map W}

For a partition $\alpha \in \P$ let $\id_\alpha\in\F(\P,\Q)$ be the map defined as follows, for $\beta \in \P$,
$$
\id_\alpha(\beta) \defeq \left\lbrace \begin{array}{ll}
1,& \text{if } \beta = \alpha;\\
0,& \text{if } \beta \neq \alpha.\\
\end{array} \right.
$$
Note that, for $\alpha \in \P$, for $p \in \O$,
\begin{equation}\label{eq:comb:w-id-delta}
W(\id_\alpha)(p) = \sum_{\beta \in \P} \id_\beta(\alpha)\Delta_\beta(p) = \Delta_\alpha(p).
\end{equation}

From now we assume that $m \geq 5$ or $m = 3$.
For every map $x \in \F(\O,\Q)$ define the map in $\F(\P,\Q)$,
$$
\xi_x \defeq \sum_{r \in \O} x(r) \sum_{p \in \O}  D_{r,p}  \id_{\{p, \M\setminus p\}},
$$
where $D$ is the matrix defined previously.
Using \Cref{lemma:Binvertible}, for $x \in \F(\O,\Q)$ and $t\in \O$,
\begin{align}\label{eq:comb:w-eta-x}
\begin{split}
W(\xi_x)(t) & 
\overset{(\ref{eq:comb:w-id-delta})}{=}
\sum_{r \in \O} x(r) \sum_{p \in \O} D_{r,p} \Delta_{\{p, \M\setminus p\}}(t)\\
&= \sum_{r \in \O} x(r) \sum_{p \in \O}  D_{r,p} B_{p,t} =\sum_{r \in \O} x(r) (DB)_{r,t} = x(t).
\end{split}
\end{align}

For every $\alpha \in \P \setminus \P_2$ define the function in $\F(\P, \Q)$,
$$
	\zeta_\alpha \defeq \id_\alpha - \xi_{\Delta_\alpha}.
$$
Recall that for a partition $\alpha \in \P$, the function $\Delta_\alpha$ is in $\F(\O,\Q)$.
Using this equality, for $\alpha \in \P$, 
\begin{equation}\label{eq:comb:w-eta-alpha}
W(\zeta_\alpha)
\overset{(\ref{eq:comb:w-id-delta})}{=}
\Delta_\alpha - W(\xi_{\Delta_\alpha})
\overset{(\ref{eq:comb:w-eta-x})}{=}
\Delta_\alpha - \Delta_\alpha = 0.
\end{equation}


Consider the subspace $V_0$ of $\F(\P, \Q)$ of those functions that take zero values on the partitions in $\P \setminus \P_2$,
$$
V_0 \defeq \{ \eta \in \F(\P, \Q) \mid \forall \alpha \in \P \setminus \P_2,\; \eta(\alpha) = 0\}.
$$
The subspace $V_0$ is the image of the canonical embedding of $\F(\P_2, \Q)$ into $\F(\P, \Q)$. 

\begin{proposition}\label{thm:sum-of-spaces}
	If $m \geq 5$ or $m = 3$, then $\F(\P, \Q) = V_0 \oplus \Ker W$.
\end{proposition}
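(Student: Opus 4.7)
The plan is to show the two properties separately: that every function in $\F(\P,\Q)$ decomposes as a sum of an element of $V_0$ and an element of $\Ker W$, and that the intersection $V_0 \cap \Ker W$ is trivial. Both directions are essentially already prepared by the machinery built up in the previous sections.

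For the sum $\F(\P,\Q) = V_0 + \Ker W$, I would first observe a crucial fact: for any $x \in \F(\O,\Q)$, the function $\xi_x$ lies in $V_0$. This is because $\xi_x$ is a $\Q$-linear combination of the indicators $\id_{\{p,\M\setminus p\}}$ with $p \in \O$, and each such two-class partition $\{p, \M\setminus p\}$ belongs to $\P_2$. Now decompose an arbitrary $\eta \in \F(\P,\Q)$ as $\eta = \sum_{\alpha \in \P_2} \eta(\alpha) \id_\alpha + \sum_{\alpha \in \P \setminus \P_2} \eta(\alpha) \id_\alpha$. For each $\alpha \in \P \setminus \P_2$, rewrite $\id_\alpha = \xi_{\Delta_\alpha} + \zeta_\alpha$, where $\xi_{\Delta_\alpha} \in V_0$ by the observation above, and $\zeta_\alpha \in \Ker W$ by equation~(\ref{eq:comb:w-eta-alpha}). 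Collecting terms, the $V_0$-part absorbs the first sum together with all $\xi_{\Delta_\alpha}$ contributions, and the $\Ker W$-part is $\sum_{\alpha \in \P \setminus \P_2} \eta(\alpha) \zeta_\alpha$.

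For the directness $V_0 \cap \Ker W = 0$, suppose $\eta \in V_0$ satisfies $W(\eta) = 0$. Since $\eta$ vanishes outside $\P_2$, parameterizing $\P_2$ by $\O$ via $p \mapsto \{p, \M\setminus p\}$, we may write $\eta = \sum_{p \in \O} \eta(\{p,\M\setminus p\}) \id_{\{p,\M\setminus p\}}$. Then for every $t \in \O$, $W(\eta)(t) = \sum_{p \in \O} \eta(\{p,\M\setminus p\}) \Delta_{\{p,\M\setminus p\}}(t)$, and one checks that $\Delta_{\{p,\M\setminus p\}}(t) = B_{p,t}$ directly from the definitions (both equal $1$ iff $\card{p \cap t} = 1$). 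Hence $W(\eta) = 0$ translates into a linear system with matrix $B$ acting on the vector of values $(\eta(\{p, \M\setminus p\}))_{p \in \O}$. By \Cref{lemma:Binvertible} the matrix $B$ is invertible, so all these values vanish and $\eta = 0$.

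The only non-routine ingredient is recognizing that $\Delta_{\{p,\M\setminus p\}}(t)$ coincides with $B_{p,t}$, which reduces the injectivity question on $V_0$ to the invertibility of $B$ already proved earlier; everything else is a direct unpacking of definitions together with equations~(\ref{eq:comb:w-id-delta}), (\ref{eq:comb:w-eta-x}), (\ref{eq:comb:w-eta-alpha}). The potential pitfall is the hypothesis $m \geq 5$ or $m = 3$: it is needed precisely so that \Cref{lemma:Binvertible} applies (the denominator $2(m-2)(m-4)$ in the definition of $D$ must be nonzero).
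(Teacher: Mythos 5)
Your proof is correct, and it is built from the same ingredients as the paper's (the preimages $\xi_x$, the kernel elements $\zeta_\alpha$, and \Cref{lemma:Binvertible}), but it is organized differently. The paper argues by dimension count: $W$ is onto because every $x$ has the preimage $\xi_x$, so $\dim\Ker W=\card{\P}-\card{\O}=\card{\P\setminus\P_2}$ (using $\card{\O}=\card{\P_2}$ for $m\neq 4$), the $\zeta_\alpha$ form a basis of $\Ker W$, and the claim $V_0\cap\Ker W=\{0\}$ is essentially asserted before adding up dimensions. You instead exhibit the decomposition explicitly, writing $\id_\alpha=\xi_{\Delta_\alpha}+\zeta_\alpha$ for $\alpha\in\P\setminus\P_2$ with $\xi_{\Delta_\alpha}\in V_0$, and you prove directness by identifying $W$ restricted to $V_0$ with the matrix $B$ via $\Delta_{\{p,\M\setminus p\}}(t)=B_{p,t}$ and invoking invertibility. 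This buys you two things: you never need the rank--nullity count, and you actually supply the argument for $V_0\cap\Ker W=\{0\}$ that the paper leaves implicit. One small caveat: you attribute the hypothesis $m\geq 5$ or $m=3$ solely to the nonvanishing of the denominator $2(m-2)(m-4)$ in $D$, but your argument also uses that $p\mapsto\{p,\M\setminus p\}$ is a bijection $\O\to\P_2$ (when you expand $\eta\in V_0$ as a sum indexed by $\O$); for $m=4$ this map is two-to-one, so that step would fail independently of $B$. Under the stated hypothesis both issues are excluded, so the proof stands.
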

\begin{proof}
	From \cref{eq:comb:w-eta-x}, the map $\xi_x \in \F(\P,\Q)$ is a pre-image of a map $x \in \F(\O,\Q)$ under the map $W$, and hence $W: \F(\P,\Q) \rightarrow \F(\O,\Q)$ is onto. The dimension of the kernel of $W$ is equal to $\card{\P} - \card{\O} = \card{\P} - \card{\P_2} = \card{\P \setminus \P_2}$, which is true for $m \neq 4$. 
	Obviously, the maps $\zeta_\alpha$ for $\alpha \in \P\setminus\P_2$, are linearly independent over $\mathbb{Q}$ and thus form a basis of $\Ker W$: see \cref{eq:comb:w-eta-alpha}.
	The dimension of $V_0$ is equal to $\card{\P_2}$. 
	Also, $\Ker W \cap V_0 = \{0\}$. Therefore, $\F(\P, \Q) = V_0 \oplus \Ker W$.
\end{proof}

\section{Proof of the main theorem}
Before starting to prove the main theorem, let us prove several necessary equalities.

For $g \in \mathfrak{S}(M)$ and $p=\{i,j\} \in \O$, the value $g(\Delta_\alpha)(p) = \Delta_\alpha(g^{-1}(p))$ is $0$ if $g^{-1}(i)$ and $g^{-1}(j)$ belong to the same class in $\alpha$, and the value is $1$ otherwise. Obviously, $g^{-1}(i)$ and $g^{-1}(j)$ are in the same class of $\alpha$ if and only if $i$ and $j$ are in the same class of $g(\alpha)$. Thus we have the equality 
\begin{equation}\label{eq:comb:delta-g}
\Delta_{g(\alpha)} = g(\Delta_{\alpha}).
\end{equation}

For $g \in \mathfrak{S}(M)$, for $\eta \in \F(\P, \Q)$ and $p \in O$,
\begin{align}\label{eq:comb:w1}
\begin{split}
W(g(\eta))(p) &= \sum_{\alpha \in \P} g(\eta)(\alpha) \Delta_{\alpha}(p)
=\sum_{\alpha \in \P} \eta(g^{-1}(\alpha)) \Delta_{\alpha}(p) \\
&=\sum_{\beta \in \P} \eta(\beta) \Delta_{g(\beta)}(p) \overset{(\ref{eq:comb:delta-g})}{=}
\sum_{\beta \in \P} \eta(\beta) \Delta_{\beta}(g^{-1}(p)) =
W(\eta)(g^{-1}(p)).
\end{split}
\end{align}

For $g \in \mathfrak{S}(M)$ and for all $\beta \in \P$, the value $g(\id_\alpha)(\beta) = \id_\alpha(g^{-1} (\beta))$ is $1$, if $g(\alpha) = \beta$, and is $0$ otherwise. Hence we have the equality
\begin{equation}\label{eq:comb:indicator-g}
g(\id_\alpha) = \id_{g(\alpha)}.
\end{equation}

For $g \in \mathfrak{S}(M)$, for $x \in \F(\O,\Q)$,
\begin{align}\label{eq:comb:g-xi-x}
\begin{split}
g(\xi_x) 
&= \sum_{r \in \O}  x(r) \sum_{p \in \O} D_{r, p} \id_{\{g(p),\M\setminus g(p)\}}
=\sum_{r \in \O} x(r) \sum_{g^{-1}(p) \in \O}  D_{r, g^{-1}(p)} \id_{\{p,\M\setminus p\}}\\
&= \sum_{r \in \O} x(r) \sum_{p \in \O}  D_{g(r), p} \id_{\{p,\M\setminus p\}}
= \sum_{g^{-1}(r) \in \O} x(g^{-1}(r)) \sum_{p \in \O}  D_{r, p} \id_{\{p,\M\setminus p\}} \\
&= \sum_{r \in \O}g(x)(r) \sum_{p \in \O}  D_{r, p} \id_{\{p,\M\setminus p\}} = \xi_{g(x)}.
\end{split}
\end{align}

For $g \in \mathfrak{S}(M)$ and $\alpha \in \P$, 
\begin{equation}\label{eq:comb:eta-alpha-g}
g(\zeta_\alpha) = g(\id_{g(\alpha)}) - g(\xi_{\Delta_\alpha})
= 
\id_{g(\alpha)} - \xi_{\Delta_{g(\alpha)}} = \zeta_{g(\alpha)},
\end{equation}
where the equality in the middle holds due to \cref{eq:comb:delta-g}, \cref{eq:comb:indicator-g} and \cref{eq:comb:g-xi-x}.

\begin{lemma}\label{lemma:sum-of-maps-stabs}
 If $\eta_0 \in \Ker W$ and $\eta_1 \in V_0$, then the equalities hold,
\begin{align*}
\Stab(\eta_0 + \eta_1) &= \Stab(\eta_0) \cap \Stab(\eta_1),\\
\Stab_W(\eta_0 + \eta_1) &= \Stab_W(\eta_1) = \Stab(\eta_1).
\end{align*}
\end{lemma}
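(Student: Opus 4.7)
The plan is to leverage the direct sum decomposition $\F(\P,\Q)=V_0\oplus \Ker W$ from \Cref{thm:sum-of-spaces} together with the observation that the $\mathfrak{S}(M)$-action respects this decomposition. Once that invariance is established, both equalities follow from uniqueness of the decomposition.

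First I would show that both $V_0$ and $\Ker W$ are $\mathfrak{S}(M)$-invariant subspaces of $\F(\P,\Q)$. For $V_0$: the action of any $g\in\mathfrak{S}(M)$ permutes $\P_2$ among itself (two-element subsets of $\M$ are sent to two-element subsets), so by \cref{eq:comb:indicator-g} the subspace $V_0$ spanned by $\{\id_\alpha\mid \alpha\in\P_2\}$ is preserved. For $\Ker W$: by \cref{eq:comb:w1} we have $W(g(\eta))(p)=W(\eta)(g^{-1}(p))$, so $W(\eta)=0$ implies $W(g(\eta))=0$.

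For the first equality, the inclusion $\Stab(\eta_0)\cap\Stab(\eta_1)\subseteq\Stab(\eta_0+\eta_1)$ is immediate by linearity of the action. Conversely, if $g\in\Stab(\eta_0+\eta_1)$, then by invariance $g(\eta_0)\in\Ker W$ and $g(\eta_1)\in V_0$, and the equality $g(\eta_0)+g(\eta_1)=\eta_0+\eta_1$ forces $g(\eta_0)=\eta_0$ and $g(\eta_1)=\eta_1$ by uniqueness of the decomposition $\F(\P,\Q)=V_0\oplus\Ker W$.

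For the second chain, since $\eta_0\in\Ker W$ and $g(\eta_0)\in\Ker W$, we have $W(\eta_0+\eta_1)=W(\eta_1)$ and $W(g(\eta_0+\eta_1))=W(g(\eta_1))$, so $\Stab_W(\eta_0+\eta_1)=\Stab_W(\eta_1)$. Finally, for $\Stab_W(\eta_1)=\Stab(\eta_1)$ with $\eta_1\in V_0$: the inclusion $\supseteq$ is clear, and for $\subseteq$ note that if $W(g(\eta_1))=W(\eta_1)$ then $g(\eta_1)-\eta_1\in\Ker W$; but also $g(\eta_1)-\eta_1\in V_0$ by invariance, so $g(\eta_1)-\eta_1\in V_0\cap\Ker W=\{0\}$.

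There is no serious obstacle here; the only point requiring care is the invariance of $\Ker W$ and $V_0$ under the $\mathfrak{S}(M)$-action, which is exactly what converts the algebraic decomposition of \Cref{thm:sum-of-spaces} into the group-theoretic statements of the lemma.
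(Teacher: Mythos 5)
Your proof is correct and follows essentially the same route as the paper: establish that $V_0$ and $\Ker W$ are $\mathfrak{S}(M)$-invariant, then use the uniqueness of the decomposition $\F(\P,\Q)=V_0\oplus\Ker W$ from \Cref{thm:sum-of-spaces} for the first equality and the behaviour of $W$ on the two summands for the second. The only cosmetic difference is in the last step, where you use $V_0\cap\Ker W=\{0\}$ directly instead of the injectivity of $W$ restricted to $V_0$; these are equivalent.
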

\begin{proof}
We first show that the spaces $\Ker W$ and $V_0$ are invariant under the action of $\mathfrak{S}(M)$. Indeed, for $g \in \mathfrak{S}(M)$, if $\eta \in \Ker W$, then for all $p \in \O$,
$$
W(g(\eta))(p) \overset{(\ref{eq:comb:w1})}{=} W(\eta)(g^{-1}(p)) = 0,$$ and hence $g(\eta) \in \Ker W$. If $\eta \in V_0$, then for all $\alpha \in \P\setminus \P_2$, $g(\eta)(\alpha) = \eta(g^{-1}(\alpha)) = 0$, and thus $g(\eta) \in V_0$.

Now we prove the first equality. If $g \in \Stab(\eta_0) \cap \Stab(\eta_1)$, then $g(\eta_0) = \eta_0$ and $g(\eta_1) = \eta_1$. Hence $g(\eta_0 + \eta_1) = g(\eta_0) + g(\eta_1) = \eta_0 + \eta_1$ and thus $g \in \Stab(\eta_0 + \eta_1)$. Conversely, if $g \in \Stab(\eta_0 + \eta_1)$, then $g(\eta_0 + \eta_1) = \eta_0 + \eta_1$. Since $g(\eta_0) \in \Ker W$ and $g(\eta_1) \in V_2$, by the uniqueness of the decomposition (see \Cref{thm:sum-of-spaces}) $g(\eta_0) = \eta_0$ and $g(\eta_1) = \eta_1$, which means $g \in \Stab(\eta_0) \cap \Stab(\eta_1)$.
	Therefore, $\Stab(\eta_0) \cap \Stab(\eta_1) = \Stab(\eta_0 + \eta_1)$.
	
The second equality follows from the fact that for all $p \in \O$, $W(\eta_0 +  \eta_1) = W(\eta_0) + W(\eta_1) = W(\eta_1)$ and for all $g \in \mathfrak{S}(M)$, 
$W(g(\eta_0 +  \eta_1)) = W(g(\eta_0)) + W(g(\eta_1)) = W(g(\eta_1))$. 
	
Finally, we prove the third equality. The restriction of $W$ on the subspace $V_0$ is a bijection by \Cref{thm:sum-of-spaces}.
If $g \in \Stab_W(\eta_1)$, then $W(g(\eta_1)) = W(\eta_1)$.
The map $g(\eta_1)$ is in $V_0$.
	Hence, applying $W^{-1}$ to both sides of equality, we get $g(\eta_1) = \eta_1$ and thus $g \in \Stab(\eta_1)$. Since $\Stab(\eta_1) \leq \Stab_W(\eta_1) \leq \Stab(\eta_1)$, we get the third equality in the statement.
\end{proof}
Recall that $\P_2$ is a set of partitions of the form $\lbrace p, \M \setminus p \rbrace$, $p \in \O$.
Define the map $\id_{\P_2} \in \F(\P, \Q)$,
$$
\id_{\P_2} = \sum_{\alpha \in \P_2} \id_{\alpha}.
$$
\begin{lemma}\label{lemma:plus-const}
	Let $\eta \in \F(\P, \Q)$. For all $c \in \Q$,
\begin{align*}
\Stab(\eta) &= \Stab(\eta + c\id_{\P_2}),\\
	\Stab_W(\eta) &= \Stab_W(\eta + c\id_{\P_2}).
\end{align*}
\end{lemma}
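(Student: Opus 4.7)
The plan is to reduce both equalities to the single observation that $\id_{\P_2}$ is fixed by the canonical action of $\mathfrak{S}(M)$ on $\F(\P,\Q)$.

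First I would verify that $\mathfrak{S}(M)$ preserves $\P_2$ setwise. For $g \in \mathfrak{S}(M)$ and $p \in \O$, the bijectivity of $g$ gives $g(\{p, \M\setminus p\}) = \{g(p), \M\setminus g(p)\}$, and $g(p) \in \O$ since $g$ sends $2$-element subsets to $2$-element subsets. Combining this with \cref{eq:comb:indicator-g} yields
\begin{equation*}
g(\id_{\P_2}) = \sum_{\alpha \in \P_2} g(\id_\alpha) = \sum_{\alpha \in \P_2} \id_{g(\alpha)} = \sum_{\beta \in \P_2} \id_\beta = \id_{\P_2},
\end{equation*}
where the third equality uses that $g$ acts as a permutation of $\P_2$.

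For the first equality, by $\Q$-linearity of the action, $g(\eta + c\id_{\P_2}) = g(\eta) + c\id_{\P_2}$, so $g$ fixes $\eta + c\id_{\P_2}$ if and only if it fixes $\eta$. This gives $\Stab(\eta) = \Stab(\eta + c\id_{\P_2})$.

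For the second equality, I would combine the previous observation with the $\Q$-linearity of $W$ and of the $\mathfrak{S}(M)$-action: for every $g \in \mathfrak{S}(M)$,
\begin{equation*}
W(g(\eta + c\id_{\P_2})) - W(\eta + c\id_{\P_2}) = W(g(\eta)) - W(\eta),
\end{equation*}
since the $c\id_{\P_2}$ contributions cancel (both sides involve $cW(\id_{\P_2})$). Therefore $g \in \Stab_W(\eta + c\id_{\P_2})$ if and only if $g \in \Stab_W(\eta)$.

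There is no real obstacle here; once one establishes the invariance $g(\id_{\P_2}) = \id_{\P_2}$, both statements follow by immediate linearity arguments. The only subtlety is noticing that the action preserves the cardinality of the pieces of the partition, which is what makes $\P_2$ a union of $\mathfrak{S}(M)$-orbits in $\P$.
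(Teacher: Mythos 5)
Your proposal is correct and follows essentially the same route as the paper: both hinge on the invariance $g(\id_{\P_2}) = \id_{\P_2}$ (via \cref{eq:comb:indicator-g} and the fact that $g$ permutes $\P_2$) and then conclude by linearity; the only cosmetic difference is that you phrase the conclusion as a direct equivalence, whereas the paper proves one inclusion and obtains the reverse by replacing $c$ with $-c$.
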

\begin{proof}
For all $g \in \mathfrak{S}(M)$,
$$
g(\id_{\P_2}) = \sum_{\alpha \in \P_2} g(\id_{\alpha}) 
\overset{(\ref{eq:comb:indicator-g})}{=}
\sum_{\alpha \in \P_2} \id_{g(\alpha)} = \sum_{g^{-1}(\alpha) \in \P_2} \id_{\alpha} = \id_{\P_2}.
$$
We now prove the second equality of the statement. If $g \in \Stab_W(\eta)$, then
	\begin{align*}
	W(g(\eta + c\id_{\P_2})) 
	= W(g(\eta)) + W(cg(\id_{\P_2})) = W(\eta) + W(c\id_{\P_2}) = W(\eta + c\id_{\P_2}).
	\end{align*}
	Hence $g \in \Stab_W(\eta + c\id_{\P_2})$ and therefore $\Stab_W(\eta) \subseteq \Stab_W(\eta + c\id_{\P_2})$. From this, $\Stab_W(\eta + c\id_{\P_2}) \subseteq \Stab_W((\eta + c\id_{\P_2}) + (-c)\id_{\P_2})) = \Stab_W(\eta)$.
	The first equality is proven in the same way.
\end{proof}
Now we are ready to prove the main theorem of the paper.
\begin{proof}[Proof of \Cref{thm:main}]
Part \ref{thm:main:a}.
Let $\lambda \in \F(\M, A^n)$ be an encoding map of the code $C$.
From \Cref{thm:groups-as-stabs}, we have to show that $\Iso(C) = \Stab_W(\eta_\lambda)$ is closed under the action on $\P_2$. Since the bijection $\O \rightarrow \P_2$, $p \mapsto \{p,\M\setminus p\}$ preserves the action of the group $\mathfrak{S}(M)$, we have to show that $\Stab_W(\eta_\lambda)$ is closed under the action on $\O$.
If $g \in \mathfrak{S}(M)$ preserves the orbits of $\Stab_W(\eta_\lambda)$ on $\O$, then so does $g^{-1}$, and for all $p \in \O$,
$$
W(g(\eta_\lambda))(p) \overset{(\ref{eq:comb:w1})}{=} W(\eta_\lambda)(g^{-1}(p)) = W(\eta_\lambda)(p).
$$
Hence $g \in \Stab_W(\eta_\lambda)$, which means that $\Stab_W(\eta_\lambda)$ is closed under the action on $\O$. 

\vspace{1em}
Part \ref{thm:main:b}. Let $\lambda \in \F(\M, A^n)$ be an encoding map of the code $C$.
	From \Cref{thm:sum-of-spaces}, there exists the unique decomposition of the multiplicity function $\eta_\lambda \in \F(\P, \Q)$ into the sum of two maps $\eta_0 \in \Ker W$ and $\eta_1 \in V_0$ with $\eta_\lambda = \eta_0 + \eta_1$.

	From \Cref{thm:groups-as-stabs}, $\Stab(\eta_\lambda) = \Mon(C)$ and $\Stab_W(\eta_\lambda) = \Iso(C)$. From \Cref{lemma:sum-of-maps-stabs}, $\Stab(\eta_\lambda) = \Stab(\eta_0) \cap \Stab_W(\eta_\lambda)$.
	Finally, we have $\Mon(C) = \Stab(\eta_0) \cap \Iso(C)$.
	
Let us prove that $\Stab(\eta_0)$ is closed under the action on $\P \setminus \P_2$. Let $g \in \mathfrak{S}(M)$ preserve the orbits of $\Stab(\eta_0)$ acting on $\P\setminus \P_2$. This means that $\eta_0(\alpha) = \eta_0(g(\alpha))$ for all $\alpha \in \P\setminus \P_2$. Consider the expansion of the map $\eta_0$ in the basis $\zeta_\beta$, $\beta \in \P \setminus \P_2$,
$$
	\eta_0 = \sum_{\beta \in \P\setminus \P_2} \eta_0(\beta)\zeta_\beta.
$$
For $\alpha \in \P$,
\begin{align*}
	g^{-1}(\eta_0)(\alpha) = \eta_0(g(\alpha)) &= \sum_{\beta \in \P\setminus \P_2} \eta_0(\beta)\zeta_\beta(g(\alpha)) 
	\overset{(\ref{eq:comb:eta-alpha-g})}{=}
	\sum_{\beta \in \P\setminus \P_2} \eta_0(\beta)\zeta_{g^{-1}(\beta)}(\alpha)\\
	&=\sum_{\beta \in \P \setminus \P_2} \eta_0(g(\beta))\zeta_{\beta}(\alpha) = \sum_{\beta \in \P \setminus \P_2} \eta_0(\beta)\zeta_{\beta}(\alpha) = \eta_0(\alpha).
\end{align*}
Hence $g \in \Stab(\eta_0)$, and therefore $\Stab(\eta_0)$ is closed under the action on $\P\setminus \P_2$.
	
\vspace{1em}	
Part \ref{thm:main:c}.
Let $x \in \F(\O,\Q)$ be a function that takes equal values on each orbit and different values on different orbits of $H_2$ acting on $\O$.
From \cref{eq:comb:w-eta-x} and \cref{eq:comb:g-xi-x},
$\Stab_W(\xi_x) = \{ g \in \mathfrak{S}(\M) \mid W(g(\xi_x)) = W(\xi_x) \} = \{ g \in \mathfrak{S}(\M) \mid g(x) = x \}$.
For every $g \in H_2$, $g(x) = x$ and thus $H_2 \subseteq \Stab_W(\xi_x)$. Conversely, if $g \in \Stab_W(\xi_x)$, then $g(x) = x$ and thus $g$ preserves the orbits of $H_2$.
As in the proof of part \ref{thm:main:a}, the group $H_2$ is closed under the action on $\O$, and hence $g \in H_2$.
Finally, $H_2 = \Stab_W(\xi_x)$.

Let $\{X_1,\dots, X_t\}$ be the set of orbits of $H_1$ acting on the set $\P\setminus\P_2$. 
Let $c_1, \dots, c_t \in \Q$ be different nonnegative numbers and define the map
$$
\eta_0 = \sum_{i=1}^t c_i \sum_{\alpha \in X_i} \zeta_\alpha.
$$
For all $g \in H_1$, 
$$
g(\eta_0)= \sum_{i=1}^t c_i \sum_{\alpha \in X_i} g(\zeta_\alpha) \overset{(\ref{eq:comb:eta-alpha-g})}{=} \sum_{i=1}^t c_i \sum_{\alpha \in X_i} \zeta_{g(\alpha)}
= \sum_{i=1}^t c_i \sum_{g^{-1}(\alpha) \in X_i} \zeta_{\alpha} = 
\eta_0,
$$
and thus $H_1 \leq \Stab(\eta_0)$.
Conversely, if $g \in \Stab(\eta_0)$, then, for every $i \in \range{t}$ and every $\beta \in X_i$, $\eta_0(g^{-1}(\beta)) = g(\eta_0)(\beta) = \eta_0(\beta) = c_i$. Hence $g^{-1}(\beta) \in X_i$ and thus $g^{-1}(X_i) = X_i$. Since $H_1$ is closed under the action on $\P\setminus \P_2$, we have $g^{-1} \in H_1$ and thus $g \in H_1$.
Hence $H_1 = \Stab(\eta_0)$.

Since $\eta_0 \in \Ker W$ and $\xi_x \in V_0$, from \Cref{lemma:sum-of-maps-stabs}, $\Stab_W(\eta_0 + \xi_x) = \Stab_W(\xi_x) = H_2$ and $\Stab(\eta_0 + \xi_x) = \Stab(\eta_0) \cap \Stab_W(\xi_x) = H_1 \cap H_2$.

There exist two nonnegative integers $c, c'$ such that $c'(\eta_0 + \xi_x) + c\id_{\P_2}$ takes positive integer values.
From \Cref{lemma:plus-const}, $\Stab_W(c'(\eta_0 + \xi_x) + c \id_{\P_2}) = \Stab_W(c'(\eta_0 + \xi_x))= \Stab_W(\eta_0 + \xi_x)= H_2$ and $\Stab(c'(\eta_0 + \xi_x) + c \id_{\P_2}) = \Stab(c'(\eta_0 + \xi_x)) = \Stab(\eta_0 + \xi_x) = H_1 \cap H_2$.

From \Cref{thm:from-eta-to-parametrization}, there exists an encoding map $\lambda :\M \rightarrow A^n$ such that $\eta_{\lambda} = c'(\eta_0 + \xi_x) + c \id_{\P_2}$. Define $C = \lambda(\M)$.
	By \Cref{thm:groups-as-stabs}, $\Iso(C) = \Stab_W(\eta_\lambda) = H_2$ and $\Mon(C) = \Stab(\eta_\lambda) = H_1 \cap H_2 = H_1 \cap \Iso(C)$.
\end{proof}

\appendix
\section{Codes with $4$ elements}\label{sec:comb:m-equal-4}
There are two main reasons why the general approach of the paper fails for $m = 4$. The first reason is that for $m = 4$ the sets $\P_2$ and $\O$ have $3$ and $6$ elements correspondingly, whereas for $m \geq 5$ or $m = 3$, $\card{\P_2} = \card{\O} = \frac{m(m-1)}{2}$, which is crucial in several places in the proof. The second reason is that the matrix $B$ is not invertible and the matrix $D$ is not defined.

However, if $m = 4$ and $q \geq 3$, we still can use the basic idea and an analogue of \Cref{thm:main} holds. 
For this, let us make several changes in the text of the paper.
Replace $\P_2$ with
$$
\P_2' \defeq \Bigl\lbrace \big\lbrace \{ i \}, \{ j\}, \{\M\setminus \{i,j \} \} \big\rbrace \Bigm| \{i, j\} \subset \M \Bigr\rbrace.
$$
The set $\P_2'$ is well-defined since each of its elements has $3$ classes, which is not greater than $q$.

Replace the matrices $B$ and $D$ with $B'$ and $D'$, for $p, t \in \O$, where
$$
B'_{p,t} \defeq \left\lbrace \begin{array}{ll}
1,& \text{if } \card{p \cap t} \neq 0;\\
0,& \text{if } \card{p \cap t} = 0.
\end{array}\right.\quad
D'_{p,t} \defeq \left\lbrace \begin{array}{ll}
\frac{1}{5},& \text{if } \card{p \cap t} \neq 0;\\
-\frac{4}{5},& \text{if } \card{p \cap t} = 0.
\end{array}\right.
$$
And replace the map $\xi_x$ with
$$
\xi_x' \defeq \sum_{p = \{i,j\} \in \O} \sum_{r \in \O} D'_{r,p}x(r) \id_{ \left\lbrace \{ i \}, \{ j \} , \M\setminus p \right\rbrace }.
$$
In such a way $\card{\P_2'} = \card{\O}$ and the bijection $\P_2' \rightarrow \O, \{ \{i\}, \{j\}, \M\setminus \{i, j\} \} \mapsto \{i,j\}$ preserves the action of the group $\mathfrak{S}(\M)$. Also, $B'$ is invertible and $B'D' = D'B' = I$, where $I$ is the $6 \times 6$ identity matrix.

One can verify that all the statement, equalities and proofs remain correct with these changes.

Consider the case $m = 4$ and $q = 2$. The set $\P$ contains $8$ partitions and the set $\O$ contains $6$ pairs.
Fixing bases in $\F(\P,\Q)$ and $\F(\O, \Q)$, look at the $\Q$-linear map $W$ as the following $6 \times 8$ matrix over $\Q$:
$$
W = \left(
\begin{array}{l|cccc|ccr}
0 & 1& 1& 0& 0& 0& 1& 1\\
0 &1 &0 &1 &0 &1 &0 &1\\
0 &1 &0 &0 &1 &1 &1 &0\\
0 &0 &1 &1 &0 &1 &1 &0\\
0 &0 &1 &0 &1 &1 &0 &1\\
0 &0 &0 &1 &1 &0 &1 &1\\
\end{array}
\right).
$$
The vertical lines separate the columns labeled by partitions from three different orbits under the action of $\mathfrak{S}_4$ on the set $\P$.
\begin{proposition}
	Let $C$ be a binary code with $4$ codewords. The groups $\Iso(C) \leq \mathfrak{S}_4$ and $\Mon(C) \leq \mathfrak{S}_4$ are equal and are closed under the action on $\O$.
	For every subgroup $H \leq \mathfrak{S}_4$ that is closed under the action on $\O$, there exists a binary code $C$ with $4$ codewords such that $\Iso(C) = \Mon(C) = H$.
\end{proposition}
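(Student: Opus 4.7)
The plan is to work out $\Ker W$ explicitly from the displayed $6 \times 8$ matrix. Two kernel elements are visible by inspection: the first column (indexed by the trivial partition $\{\M\}$) is zero, so $k_1 \defeq \id_{\{\M\}} \in \Ker W$; and the sum of the four $(1,3)$-type columns equals the sum of the three $(2,2)$-type columns (both equal to $(2,2,2,2,2,2)$), giving
$$
k_2 \defeq \sum_{\alpha \text{ of type }(1,3)} \id_\alpha - \sum_{\alpha \text{ of type }(2,2)} \id_\alpha \in \Ker W.
$$
A direct row reduction of the matrix (equivalently, verifying that the six rows of $W$ are $\Q$-linearly independent) shows that $\dim \Ker W = 2$, so $k_1, k_2$ span the kernel. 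The essential feature is that both generators are $\mathfrak{S}_4$-invariant: $\{\M\}$ is a fixed point of the action on $\P$, and $\mathfrak{S}_4$ permutes the four $(1,3)$-partitions (and the three $(2,2)$-partitions) among themselves.

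With this in hand, $\Iso(C) = \Mon(C)$ falls out quickly from \Cref{thm:groups-as-stabs}. For $g \in \Iso(C) = \Stab_W(\eta_\lambda)$, one has $g(\eta_\lambda) - \eta_\lambda = a k_1 + b k_2$ for some $a, b \in \Q$; evaluating at the $g$-fixed partition $\{\M\}$ forces $a = 0$, and summing the identity over the $(1,3)$-orbit (whose total $\eta$-mass is preserved by any element of $\mathfrak{S}_4$) forces $4b = 0$ and hence $b = 0$. Thus $g(\eta_\lambda) = \eta_\lambda$, that is, $g \in \Mon(C)$. Closure of $\Iso(C) = \Mon(C)$ under the action on $\O$ then follows by the same argument as in \Cref{thm:main}\ref{thm:main:a}, which uses only \cref{eq:comb:w1} and does not rely on the cardinality of $\M$.

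For the existence half I would adapt the strategy of \Cref{thm:main}\ref{thm:main:c}, which here is easier because no $V_0 \oplus \Ker W$ decomposition is required. Given $H \leq \mathfrak{S}_4$ closed under the action on $\O$, choose $x \in \F(\O, \Q)$ constant on each $H$-orbit and taking distinct values on distinct orbits, so that $\Stab(x) = H$ by closure. Surjectivity of $W$ (rank $6$) provides an $\eta_0 \in \F(\P, \Q)$ with $W(\eta_0) = x$, and the computation $W(g(\eta_0))(p) = W(\eta_0)(g^{-1}(p)) = x(g^{-1}(p))$ yields $\Stab_W(\eta_0) = \{g : g(x) = x\} = H$. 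To realise $\eta_0$ as the multiplicity function of a code, scale it by a positive integer $N$ to clear denominators and then add an integer multiple $c$ of the $\mathfrak{S}_4$-invariant function $\eta^* \defeq \sum_{\alpha \in \P} \id_\alpha$ large enough to make every value a positive integer. Since $g(\eta^*) = \eta^*$ for all $g$, this preserves the stabiliser, so $\Stab_W(N\eta_0 + c\eta^*) = H$. Applying \Cref{thm:from-eta-to-parametrization} then yields a code $C$ with $\Iso(C) = H$, and $\Mon(C) = H$ follows from the first half.

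The main obstacle I expect is the kernel computation: the whole argument for $\Iso = \Mon$ hinges on every element of $\Ker W$ being $\mathfrak{S}_4$-invariant, so I must rule out a surprise third, non-symmetric relation among the eight columns. Once the row reduction is carried out and $\dim \Ker W = 2$ is confirmed, the remaining steps are routine, with one small point to verify, namely that adding the strictly positive $\mathfrak{S}_4$-invariant function $\eta^*$ really suffices to enforce the positivity needed by \Cref{thm:from-eta-to-parametrization}.
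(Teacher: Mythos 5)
Your proof is correct. For the first half ($\Iso(C)=\Mon(C)$ and closure) you take essentially the paper's route: both arguments reduce to the observation that $\Ker W$ is two-dimensional and spanned by $\mathfrak{S}_4$-invariant functions, namely $\id_{\{\M\}}$ and the difference of the two orbit sums $(0,1,1,1,1,-1,-1,-1)$. The paper then kills the kernel component of $g(\eta_\lambda)-\eta_\lambda$ by averaging over all of $\mathfrak{S}_4$, while you pair it against the invariant functionals ``value at $\{\M\}$'' and ``sum over the $(1,3)$-orbit''; these are the same mechanism in different clothing, and both hinge on the rank computation you correctly flag as the one thing that must be checked by hand. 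For the existence half your route is genuinely different and arguably cleaner: the paper simply exhibits, for each of the $7$ conjugacy classes of closed subgroups of $\mathfrak{S}_4$, an explicit multiplicity vector $(x_1,\dots,x_8)$ in a table, whereas you give a uniform construction in the spirit of \Cref{thm:main}\ref{thm:main:c} --- pick $x\in\F(\O,\Q)$ separating the $H$-orbits, lift it through the surjection $W$ (surjectivity being equivalent to the rank-$6$ computation already needed), and then restore nonnegativity by adding a multiple of the invariant all-ones function $\sum_{\alpha\in\P}\id_\alpha$ in place of the unavailable $\id_{\P_2}$-shift of \Cref{lemma:plus-const}. Your version avoids the case-by-case verification and makes it transparent why closure under the action on $\O$ is exactly the right hypothesis; the paper's table has the advantage of producing concrete small codes. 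The only point worth making explicit in a final write-up is that making all values of the multiplicity function strictly positive also guarantees the resulting $\lambda$ from \Cref{thm:from-eta-to-parametrization} is injective (every pair is separated by some column), so that $C$ really has $4$ codewords --- a point the paper glosses over as well.
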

\begin{proof}
It is easy to calculate that $\Ker W$ is a subspace of dimension $2$ generated by $(1,0,0,0,0,0,0,0)^T$ and $(0,1,1,1,1, -1,-1,-1)^T.$ Note that $\Ker W$ is invariant under the action of $\mathfrak{S}_4$.

Let $\lambda: \M \rightarrow \{0,1\}^n$ be an encoding map of $C$. Assume that $W(\eta_\lambda) = W(\eta_{g\lambda})$, or equivalently, $\eta_\lambda - \eta_{g\lambda} \in \Ker W$, for some $g \in \mathfrak{S}_4$. In the proof of \Cref{thm:groups-as-stabs} we showed that $\eta_{g\lambda} = g(\eta_\lambda)$. Then,
$24(\eta_\lambda - \eta_{g\lambda}) = \sum_{h \in \mathfrak{S}_4} h(\eta_\lambda - g(\eta_\lambda)) = \sum_{h \in \mathfrak{S}_4} h(\eta_\lambda) - \sum_{h\in \mathfrak{S}_4}h(\eta_\lambda) = 0$. Hence, $W(\eta_\lambda) = W(\eta_{g\lambda})$ implies $\eta_\lambda = \eta_{g\lambda}$ for all $g \in \mathfrak{S}_4$.
By \Cref{thm:comb-extension-criterion} and \Cref{thm:groups-as-stabs}, $\Iso(C) = \Mon(C)$.

The fact that the groups $\Iso(C) = \Mon(C)$ are closed under the action on $\O$ is proven in the same way as \Cref{thm:main} \ref{thm:main:a}.

To prove the last statement, for each subgroup of $\mathfrak{S}_4$ closed under the action on $\O$ we give a construction of the code. Any binary code of cardinality $4$, up to a monomial equivalence, has the following form,
$$
\begin{array}{c|cccc|ccc}
x_1 & x_2& x_3& x_4& x_5& x_6& x_7& x_8\\
\hline
0 & 1& 0& 0& 0& 1& 1& 1 \\
0 & 0& 1& 0& 0& 1& 0& 0 \\
0 & 0& 0& 1& 0& 0& 1& 0 \\
0 & 0& 0& 0& 1& 0& 0& 1
\end{array},
$$
where $x_1,\dots,x_8$ are nonnegative integers; see \Cref{ex:comb:ex1} for more details.
There exist $7$, up to an automorphism, non-equivalent subgroups in $\mathfrak{S}_4$ that are closed under the action on $\O$. The groups and corresponding multiplicity functions of the corresponding codes are given in the following table.
\begin{center}\begin{tabular}{|c|c|}
	\hline 
	Group & $\eta_\lambda = (x_1,\dots,x_8)$ \\ 
	\hline 
	$\{e\}$ & (0,1,2,3,4,0,0,0) \\ 
	\hline
	$\langle (1,2) \rangle \cong \mathfrak{S}_2$ & (0,1,1,2,3,0,0,0) \\
	\hline
	$\langle (1,2), (1,2, 3)\rangle \cong \mathfrak{S}_3$ & (0,1,1,1,2,0,0,0) \\
	\hline
	$\langle (1,2), (3,4)\rangle\cong K_4$ & (0,1,1,2,2,0,0,0) \\
	\hline
	$\langle (1,2)(3,4), (1,3)(2,4) \rangle \cong K_4$ & (0,0,0,0,0,2,1,0) \\
	\hline
	$\langle (1,2,3,4), (1,3) \rangle \cong D_8$ & (0,0,0,0,0,1,0,1) \\
	\hline
	$\mathfrak{S}_4$ & (0,1,1,1,1,0,0,0) \\
	\hline
\end{tabular}
\end{center}
\end{proof}

\bibliographystyle{abbrv}
\bibliography{biblio_7}

\end{document}